\newcommand{\uks}{\ensuremath{\mathclose{{}^{*}}}}
\newcommand{\eql}{=}		
\newcommand{\eqd}{=}	
\newcommand{\ob}{\{\,}
\newcommand{\cb}{\,\}}
\newcommand{\st}{\;|\;}		
\newcommand{\ccup}{\:\cup\:}
\newcommand{\f}[1]{\mathsf{#1}}
\newcommand{\del}{\Delta_{\alpha p}}
\newcommand{\delw}{\hat{\Delta}_w}
\newcommand{\e}{\mathsf{E}_\alpha}
\newcommand{\D}{\mathsf{D}_{\alpha p}}
\newcommand{\lang}{\mathcal{L}}
\newcommand{\hd}{\mathsf{hd}}
\newcommand{\der}{\mathsf{der}_{\alpha p}}
\newtheorem{lemma}{Lemma}
\newtheorem{proposition}{Proposition}
\newtheorem{corollary}{Corollary}
\title{Deciding KAT and Hoare Logic with Derivatives%
\thanks{This work was partially funded by the European Regional Development
Fund through the programme COMPETE and by the Portuguese Government
through the FCT – Fundação para a Ciência e a Tecnologia under the
project PEst-C/MAT/UI0144/2011 and by project
CANTE-PTDC/EIA-CCO/101904/2008.}
}
\author{
\makebox[0cm][c]{%
Ricardo Almeida}
\institute{ 
Departamento de Ci\^encia de Computadores\\ 
Faculdade de Ci\^encias, Universidade do Porto}
\email{up030308017@alunos.dcc.fc.up.pt}
\and
\makebox[0cm][c]{%
Sabine Broda}
\hspace{40mm}%
\makebox[0cm][c]{%
Nelma Moreira}%
\institute{Centro de Matem\'atica da Universidade do Porto\\
Departamento de Ci\^encia de Computadores,\\ 
Faculdade de Ci\^encias, Universidade do Porto}
\email{
 \makebox[0cm][c]{sbb@dcc.fc.up.pt}\hspace{40mm}%
       \makebox[0cm][c]{nam@dcc.fc.up.pt}
}
}
\begin{document}
\maketitle

\begin{abstract}
Kleene algebra with tests (KAT) is an equational system for program
verification, which is the combination of Boolean algebra (BA) and
Kleene algebra (KA), the algebra of regular expressions.  In
particular, KAT subsumes the propositional fragment of Hoare logic
(PHL) which is a formal system for the specification and verification
of programs, and that is currently the base of most tools for checking
program correctness. Both the equational theory of KAT and the
encoding of PHL in KAT are known to be decidable.  In this paper we
present a new decision procedure for the equivalence of two KAT
expressions based on the notion of partial derivatives. We also
introduce the notion of derivative modulo particular sets of
equations.  With this we extend the previous procedure for deciding
PHL. Some experimental results are also presented.
 \end{abstract}

\section{Introduction}
Kleene algebra with tests (KAT) is an equational algebraic system for
reasoning about programs that combines Kleene algebra (KA) with
Boolean algebra~\cite{kozen96:_kleen_algeb_with_tests}. In particular,
KAT subsumes
PHL~\cite{kozen00:_hoare_logic_and_kleen_algeb_with_tests}, the
propositional fragment of Hoare logic, which is a formal system for the
specification and verification of programs, and that is currently the
base of most tools for checking program
correctness~\cite{hoare69:_axiom_basis_for_comput_progr}.  Testing if
two KAT expressions are equivalent is tantamount to prove that two
programs are equivalent or that a Hoare triple is valid. Deciding the
equivalence of KAT expressions is as hard as deciding regular
expressions (KA expression) equivalence,
i.e. PSPACE-complete~\cite{cohen96:_compl_of_kleen_algeb_with_tests}. In
spite of KAT's success in dealing with several software verification
tasks, there are very few software applications that implement KAT's
equational theory and/or provide adequate decision procedures. Most
of them are within (interactive) theorem provers or part of model
checking systems,
see~\cite{aboul-hosn06:_kat_ml,hofner07:_autom_reason_in_kleen_algeb,babu11:_chop_expres_and_discr_durat_calcul}
for some examples. 

Based on a rewrite system of Antimirov and
Mosses~\cite{antimirov94:_rewrit_exten_regul_expres}, Almeida \emph{et
  al.}~\cite{almeida09:_antim_mosses} developed an algorithm that
decides regular expression equivalence through an iterated process of
testing the equivalence of their derivatives, without resorting to the
classic method of minimal automaton comparison. Statistically
significant experimental tests showed that this method is, on average
and using an uniform distribution,
more efficient than the classical methods based on
automata~\cite{almeida11:_equiv_of_regul_languag}. Another advantage of
this method is that it is easily adapted to other Kleene algebra, such
as KAT. In this paper we present an extension of that decision
algorithm to test equivalence in KAT. The termination and correctness
of the algorithm follow the lines of~\cite{almeida09:_antim_mosses},
but are also close to the coalgebraic approach to KAT presented by
Kozen~\cite{kozen08:_coalg_theor_of_kleen_algeb_with_tests}.  Deciding
PHL can be reduced to testing KAT expressions
equivalence~\cite{kozen00:_hoare_logic_and_kleen_algeb_with_tests}. Here
we present an alternative method by extending the notion of derivative
modulo a set of (atomic equational) assumptions. Once again the
decision procedure has to be only slightly adapted. The new method
reduces the size of the KAT expressions to be compared with the cost
of a preprocessing phase. All the procedures were implemented in OCaml
and some experimental results are also presented.

\section{Preliminaries}
\label{sec:preliminares}
We briefly review some basic definitions about regular
expressions, Kleene algebras, Kleene algebras with tests (KAT), and
KAT expressions. For more details, we refer the reader
to~\cite{kozen94:_compl_theor_for_kleen_algeb,kozen97:_kleen_algeb_with_tests,kozen96:_kleen_algeb_with_tests,kozen03:_autom_guard_strin_and_applic,cohen96:_compl_of_kleen_algeb_with_tests}.
\subsection{Kleene Algebra and Regular Expressions}
\label{sec:ka}
Let $\Sigma=\{p_1,\ldots,p_k \}$, with $k\geq 1$, be an \emph{alphabet}.
A \emph{word} $w$ over $\Sigma$
is any finite sequence of letters. The \emph{empty word} is denoted by
$1$.  Let $\Sigma\uks$ be the set of all words over
$\Sigma$. A \emph{language} over $\Sigma$ is a subset of
$\Sigma\uks$.
The \emph{left quotient} of a language $L\subseteq \Sigma\uks$ by a
word $w\in \Sigma\uks$ is the language
$w^{-1}L=\{x\in\Sigma\uks\mid wx \in L\}$.
The set of \emph{regular expressions} over $\Sigma$, $R_\Sigma$, is defined
by:
\begin{eqnarray}
  r&:=&0 \mid 1 \mid p \in \Sigma \mid (r_1 + r_2) \mid (r_1 \cdot r_2) 
  \mid r\uks
\end{eqnarray}
where the operator $\cdot$ (concatenation) is often omitted.  The
language $\lang(r)$ associated to $r$ is inductively defined
as follows: $\lang(0)=\emptyset$,
$\lang(1)=\{1\}$, $\lang(p)=\{p\}$ for
$p \in \Sigma$,
$\lang(r_1+r_2)=\lang(r_1)\cup\lang(r_2)$,
$\lang(r_1\cdot r_2)=\lang(r_1)\cdot\lang(r_2)$, and
$\lang(r\uks)=\lang(r)\uks$.    
Two regular expressions $r_1$ and $r_2$
are \emph{equivalent} if $\lang(r_1)=\lang(r_2)$, and we write
$r_1 =  r_2$. With this interpretation, the algebraic structure
$(R_\Sigma,+,\cdot,0,1)$ constitutes an idempotent
semiring, and with the unary operator $\uks$, a Kleene algebra.

A Kleene algebra is an algebraic structure $\mathcal{K} = (K, +,
\cdot, \uks, 0, 1)$, satisfying the axioms below.
\begin{align} \label{ka:1}	
r_1 + (r_2 + r_3) &=  (r_1 + r_2) + r_3\\ \label{ka:2}	 
       r_1 + r_2 &=  r_2 + r_1  \\
 \label{ka:3}       r + 0 &=  r+r \; = r \\
\label{ka:4}        r_1(r_2r_3)  &=  (r_1r_2)r_3  \\
\label{ka:5}	  1r  &=  r1 \; =r	 \\
    r_1(r_2 + r_3) &=  r_1r_2 + r_1r_3  \label{ka:6}\\
    (r_1 + r_2)r_3  &=  r_1r_3 + r_2r_3  \label{ka:7}\\
	  0r  &=  r0 \; =0 	 \label{ka:8} \\
    1 + rr\uks &\le  r\uks\label{ka:9} \\
    1 + r\uks r  &\le r\uks \label{ka:10}\\
    r_1 + r_2r_3 \le r_3 &\rightarrow r_2\uks r_1 \le r_3 \label{ka:11}\\
    r_1 + r_2r_3 \le r_2 &\rightarrow r_1r_3\uks \le r_2 \label{ka:12}
  \end{align}

  In the above, $\le$ is defined by $r_1 \le r_2$ if and only if $r_1 +
  r_2 = r_2$.  The axioms say that the structure is an
  \emph{idempotent semiring} under $+$, $\cdot$, $0$ and $1$ and that
  $\uks$ behaves like the Kleene star operator of formal language
  theory. This axiom set (with an usual first-order deduction system)
  constitutes a complete proof system for equivalence between regular
  expressions~\cite{kozen94:_compl_theor_for_kleen_algeb}.

\subsection{Kleene Algebra with Tests and KAT Expressions}
\label{sec:kat}
A Kleene algebra with tests (KAT) is a Kleene algebra with an embedded
Boolean subalgebra $\mathcal{K} = (K, B, +, \cdot, \uks, 0, 1, \bar{
})$ where $\ \bar{}\ $ is an unary operator denoting negation and is defined only on $B$, such that
\begin{itemize}
 \item $(K, +, \cdot, \uks, 0, 1)$ is a Kleene algebra;
 \item $(B, +, \cdot, \bar{}\:, 0, 1)$ is a Boolean algebra;
 \item $(B, +, \cdot, 0, 1)$ is a subalgebra of $(K, +, \cdot, 0,
   1)$.
\end{itemize}
\noindent Thus, a KAT is an algebraic structure that satisfies the KA
axioms (\ref{ka:1})--(\ref{ka:12}) and  the axioms for a Boolean algebra $B$.
 
Let $\Sigma=\{p_1,\ldots,p_k \}$ be a non-empty set of (primitive)
\emph{action} symbols and $T=\{t_1,\ldots,t_l \}$ be a non-empty set
of (primitive) \emph{test} symbols.  The set of boolean expressions
over $T$ is denoted by $\f{Bexp}$ and the set of KAT expressions by
$\f{Exp}$, with elements $b_1$, $b_2, \ldots$ and $e_1$,
$e_2, \ldots$, respectively. The abstract syntax of KAT expressions over an alphabet
$\Sigma \ccup T$ is given by the following grammar,
\begin{eqnarray*}
  b \in \f{Bexp}&:=& 0 \mid 1 \mid t \in T \mid \overline{b} \mid b_1 + b_2
  \mid b_1 \cdot b_2 \\
  e \in \f{Exp}&:=&p \in \Sigma \mid b \in \f{Bexp} \mid e_1 + e_2 \mid e_1 \cdot e_2 
  \mid e_1 \uks .
\end{eqnarray*}
As usual, we often omit the operator $\cdot$ in concatenations and in
conjunctions. The standard language-theoretic models of KAT are regular sets of
\emph{guarded strings} over alphabets $\Sigma$ and
$T$~\cite{kozen03:_autom_guard_strin_and_applic}. Let
$\overline{T}\eqd\{\overline{t}\mid t \in T\}$
and let $\f{At}$ be
the set of \emph{atoms}, i.e., of all truth assignments to T,
\begin{align*}
  \f{At} &= \{ b_1 \ldots b_l \mid b_i \text{ is either $t_i$ 
    or   $\overline{t_i}$ for $1 \leq i \leq l$ and $t_i \in T$}\}.
\end{align*} 
Then the set of \emph{guarded strings} over $\Sigma$ and $T$ is
$\f{GS} = (\f{At} \cdot \Sigma)\uks \cdot \f{At}$. Guarded
strings will be denoted by $x,y,\ldots$.  For
$x=\alpha_1p_1\alpha_2p_2 \cdots p_{n-1}\alpha_n \in \f{GS}$, where $n
\geq 1$, $\alpha_i \in \f{At}$ and $p_i \in \Sigma$, we define
$\f{first}(x)=\alpha_1$ and
$\f{last}(x)=\alpha_n$. If $\f{last}(x) =
\f{first}(y)$, then the \emph{fusion product} $xy$ is defined by
concatenating $x$ and $y$, omitting the extra occurrence of the common
atom. If $\f{last}(x) \neq \f{first}(y)$, then $xy$ does not
exist.
For sets $X, Y \subseteq \f{GS}$ of guarded strings, the set $X
\diamond Y$ defines the set of all $xy$ such that $x
\in X$ and $y \in Y$. We have that $X^0 \eqd\f{At}$ and $X^{n+1} \eqd X
\diamond X^n$, for $n\geq 0$.

Every KAT expression $e\in \f{Exp}$ denotes a set of \emph{guarded
  strings}, $\f{GS}(e) \subseteq \f{GS}$. Given a KAT expression $e$
we define $\f{GS}(e)$ inductively as follows,
\begin{equation*}
  \label{eq:gs}
\begin{array}{llllll}
  \f{GS}(p) &  \eqd  & \ob \alpha_1 p \alpha_2 \mid \alpha_1,\alpha_2
  \in \f{At} \cb & \qquad p \in \Sigma\\
  \f{GS}(b) &  \eqd  & \ob \alpha \in \f{At} \st \alpha \leq b \cb
  & \qquad b \in \f{Bexp} \\
  \f{GS}(e_1+e_2) &  \eqd  & \f{GS}(e_1) \ccup \f{GS}(e_2) 	&  \\
  \f{GS}(e_1e_2) &  \eqd  & \f{GS}(e_1) \diamond \f{GS}(e_2) 	&  \\
  \f{GS}(e\uks) &  \eqd  & \cup_{n \geq 0} \f{GS}(e)^n .	& 
\end{array}
\end{equation*}
We say that two KAT expressions $e_1$ and $e_2$ are \emph{equivalent},
and write $e_1 = e_2$, if and only if $\f{GS}(e_1) \eql \f{GS}(e_2)$.
Kozen~\cite{kozen96:_kleen_algeb_with_tests} showed that one has $e_1
=e_2$ modulo the KAT axioms, if and only if, $e_1 = e_2$ is true in
the free Kleene algebra with tests on generators $\Sigma \cup T$.  Two
sets of KAT expressions $E,F\subseteq \f{Exp}$ are \emph{equivalent}
if and only if $\f{GS}(E) \eql \f{GS}(F)$, where $\f{GS}(E) = \cup_{e
  \in E} \f{GS}(e)$.

\section{Deciding Equivalence in KAT}
\label{equivKAT}

In this section we present a decision algorithm to test equivalence in
KAT. Kozen~\cite{kozen08:_coalg_theor_of_kleen_algeb_with_tests}
presented a coalgebraic theory for KAT extending Rutten's coalgebraic
approach for KA~\cite{rutten98:_autom_and_coind_exerc_in_coalg}, and
improving the framework of Chen and
Pucella~\cite{chen04:_coalg_approac_to_kleen_algeb_with_tests}.
Extending the notion of Brzozowski derivatives to KAT, Kozen
proved the existence of a coinductive equivalence procedure. Our
approach follows closely that work, but we explicitly define the
notion of partial derivatives for KAT, and we effectively provide a
(inductive) decision procedure. This decision procedure is an
extension of the algorithm for deciding equivalence of regular
expressions given
in~\cite{almeida09:_antim_mosses,antimirov94:_rewrit_exten_regul_expres},
that does not use the axiomatic system. Equivalence of expressions is
decided through an iterated process of testing the equivalence of
their partial derivatives.

\subsection{Derivatives}
\label{sec:derivatives}
Given a set of guarded strings $R$, its derivative with respect to
$\alpha p \in \f{At} \cdot \Sigma$, denoted by $\f{D}_{\alpha p}(R)$,
is defined as being the \emph{left quotient} of $R$ by $\alpha p$.  As such,
one considers the following \emph{derivative} functions,
\begin{equation*}
\begin{array}{lllll}
  \f{D}:  \f{At} \cdot \Sigma  \rightarrow  {\cal
    P}(\f{GS}) \rightarrow {\cal P}(\f{GS}) &  \ \ \ \ \ \  \  \ \ \f{E} :\f{At}  \rightarrow  {\cal P}(\f{GS}) \rightarrow \{0,1\}
\end{array}
 \end{equation*}
\noindent consisting of components,
\begin{equation*}
\begin{array}{lllll}
  \f{D}_{\alpha p} : {\cal P}(\f{GS}) \rightarrow {\cal P}(\f{GS}) & \ \ \ \ \ \ \ \ \ \ \ \ \f{E}_{\alpha} : {\cal P}(\f{GS}) \rightarrow \{0,1\}	\end{array}$$
defined as follows. For $\alpha \in \f{At}$, $p \in \Sigma$ and $R
\subseteq \f{GS}$,
$$\begin{array}{lll}
 \f{D}_{\alpha p} (R) \:\: &\eqd & \ob y \in \f{GS} \st \alpha p y \in R \cb 	
\end{array} \qquad \mbox{ and } \qquad
\begin{array}{lll}
 \f{E}_\alpha (R) \:\: &\eqd \:\:& \left\{\,  \begin{array}{ll} 
					   1   & \text{if } \alpha \in R \\
					   0  & \text{otherwise.}
					  \end{array} \right.
\end{array}
\end{equation*}

\vspace{-0.5cm}

\subsection{Partial Derivatives}
\label{sec:partialderivatives}
The notion of set of \emph{partial derivatives},
cf.~\cite{antimirov96:_partial_deriv_regul_expres_finit_autom_const,mirkin66},
corresponds to a finite set representation of the derivatives of an
expression.  Given  $\alpha \in \f{At}$, $p \in \Sigma$ and $e\in
\f{Exp}$, the set $\del(e)$ of partial derivatives of $e$ with respect
to $\alpha p$ is inductively defined as follows,
\begin{align*}
\f{\Delta} : \f{At} \cdot \Sigma  &\:\rightarrow\: \f{Exp}\:
\rightarrow\: {\cal P}(\f{Exp}) \\
 \del (p')  &= \left\{\begin{array}{ll}	
					   \{1\}   & \text{if } p = p' \\
					   \emptyset   & \text{otherwise}
					  \end{array} \right.		 \\
 \del (b)  &=  \emptyset	\\
 \del (e_1 + e_2)  &=  \del(e_1) \ccup \del (e_2)		\\
 \del (e_1 e_2)  &= \left\{\begin{array}{ll}	
					   \del(e_1) \cdot e_2  \  & \text{if } \f{E}_\alpha(e_1)=0 \\
					   \del(e_1) \cdot e_2 \ccup \del(e_2)   & \text{if } \f{E}_\alpha(e_1)=1
					  \end{array} \right.\\
 \del (e\uks)  &=  \del (e) \cdot e\uks,
\end{align*}
where for $\Gamma\subseteq \f{Exp}$ and $e\in \f{Exp}$,
 $\Gamma \cdot e \eql \ob e'e \st e' \in \Gamma \cb$ if $e \neq 0$ and
 $e \neq 1$, and $\Gamma \cdot 0 \eql \emptyset$ and $\Gamma \cdot 1
 \eql \Gamma$, otherwise. We note that $\del(e)$ corresponds to an
 equivalence class of $D_{\alpha p}(e)$ (the syntactic Brzozowski
 derivative, defined
 in~\cite{kozen08:_coalg_theor_of_kleen_algeb_with_tests}) modulo
 axioms (\ref{ka:1})--(\ref{ka:3}), (\ref{ka:5}), (\ref{ka:7}), and
 (\ref{ka:8}). Kozen calls such a structure a \emph{right
   presemiring}.

The following syntactic definition of $\f{E}_\alpha: \f{At}   \:\rightarrow\: \f{Exp} \rightarrow\: \{0,1\}$ is
from~\cite{kozen08:_coalg_theor_of_kleen_algeb_with_tests} and 
simply evaluates an expression with respect to the truth assignment $\alpha$.
\vspace{-0.3cm}


$$\begin{array}{ll}
 \f{E}_\alpha (p)  &=  0 \\
 \f{E}_\alpha (b)  &=  \left\{\begin{array}{ll}	
					   1   & \text{if } \alpha \leq b \\
					   0   & \text{otherwise}
					  \end{array} \right.	\\
\end{array} \qquad
\begin{array}{ll}
 \f{E_{\alpha}} (e_1 + e_2)  &=  \f{E_{\alpha}}(e_1) + \f{E_{\alpha}} (e_2)	\\
 \f{E_{\alpha}} (e_1 e_2)  &=  \f{E_{\alpha}}(e_1) \f{E_\alpha}(e_2)	\\
 \f{E_{\alpha}} (e\uks)  &= 1.		
\end{array}$$


One can show that,
\begin{align*}
\f{E_{\alpha}} (e)  &= \left\{\begin{array}{ll}	
					   1   & \text{if } \alpha \leq e \\
					   0  & \text{if } \alpha \not\leq e
					  \end{array} \right.  =  \left\{\begin{array}{ll}	
					   1   & \text{if } \alpha \in \f{GS}(e) \\
					   0   & \text{if } \alpha \notin \f{GS}(e).
					  \end{array} \right.
                                    \end{align*}

The next proposition shows that for all KAT expressions $e$ the set of
guarded strings correspondent to the set of partial derivatives of $e$
w.r.t. $\alpha p\in \f{At}\cdot \Sigma$ is the derivative of
$\f{GS}(e)$ by $\alpha p$.
\theoremstyle{plain}
\begin{proposition}\label{prop:partialderiv}
 For all KAT expressions $e$, all atoms $\alpha$ and all symbols $p$, 
  $$\D(\f{GS}(e)) \eql \f{GS}(\del(e)).$$ 
\end{proposition}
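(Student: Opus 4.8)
The plan is to proceed by structural induction on the KAT expression $e$, reducing the identity to the behaviour of the set-level derivative $\D$ on the constructors $\ccup$ and $\diamond$ and on the star. Before the induction I would isolate three auxiliary facts. First, that $\D$ commutes with arbitrary unions, $\D(\cup_i R_i) \eqd \cup_i \D(R_i)$, which is immediate from the pointwise definition $\D(R) \eqd \ob y \st \alpha p y \in R \cb$. Second, a product rule for the fusion product: for $X, Y \subseteq \f{GS}$,
$$\D(X \diamond Y) \eqd \D(X) \diamond Y \ccup \e(X) \cdot \D(Y),$$
where $\e(X) \cdot \D(Y)$ denotes $\D(Y)$ when $\alpha \in X$ and $\emptyset$ otherwise. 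Third, the compatibility of the syntactic product $\Gamma \cdot e$ with the fusion product at the level of guarded strings, $\f{GS}(\Gamma \cdot e) \eqd \f{GS}(\Gamma) \diamond \f{GS}(e)$ for $\Gamma \subseteq \f{Exp}$, where I would check the degenerate cases $e \eqd 0$ and $e \eqd 1$ against $\f{GS}(0) \eqd \emptyset$ and $\f{GS}(1) \eqd \f{At}$ (the two-sided unit for $\diamond$). I would also record the identity $\e(\f{GS}(e)) \eqd \e(e)$, i.e. $\alpha \in \f{GS}(e)$ iff $\e(e) \eqd 1$, which is exactly the characterisation of $\e$ stated above.

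The product rule is the workhorse and is proved by analysing a guarded string $\alpha p z \in X \diamond Y$ written as a fusion $xy$ with $x \in X$ and $y \in Y$. Either $x$ already carries the leading letter, so $x \eqd \alpha p x'$ with $x' \in \D(X)$ and $z \eqd x' y \in \D(X) \diamond Y$; or $x \eqd \alpha$ is a bare atom, which forces $\alpha \in X$ and $\alpha p z \eqd y \in Y$, giving $z \in \D(Y)$. The converse inclusions are routine.

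With these in hand the induction is mechanical in all cases but the star. For $e \eqd p'$ and $e \eqd b$ the two sides are computed directly (in the boolean case both are empty, since $\alpha p z$ is never an atom). For $e \eqd e_1 + e_2$ I use that $\D$ and $\f{GS}$ both commute with $\ccup$ together with the induction hypothesis. For $e \eqd e_1 e_2$ I expand $\f{GS}(e_1 e_2) \eqd \f{GS}(e_1) \diamond \f{GS}(e_2)$, apply the product rule, replace $\D(\f{GS}(e_i))$ by $\f{GS}(\del(e_i))$ via the induction hypothesis and $\e(\f{GS}(e_1))$ by $\e(e_1)$, and recognise the result as $\f{GS}(\del(e_1 e_2))$ using the third auxiliary fact and the two-case definition of $\del$ on a product.

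The star case is the main obstacle, because $\del(e\uks) \eqd \del(e) \cdot e\uks$ carries no explicit dependence on $\e(e)$, whereas the product rule introduces a term $\e(\f{GS}(e)) \cdot \D(\f{GS}(e\uks))$. Writing $X \eqd \f{GS}(e)$ and $X^{*} \eqd \f{GS}(e\uks) \eqd \cup_{n \geq 0} X^n$, the goal reduces to $\D(X^{*}) \eqd \D(X) \diamond X^{*}$. I would prove this directly rather than through the fixpoint identity $X^{*} \eqd \f{At} \ccup (X \diamond X^{*})$, since feeding that identity into the product rule, when $\alpha \in X$, degenerates to the tautology $\D(X^{*}) \eqd \D(X) \diamond X^{*} \ccup \D(X^{*})$ and hence yields only the inclusion $\D(X) \diamond X^{*} \subseteq \D(X^{*})$. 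The clean argument is: for $\alpha p z \in X^n$ with $n \geq 1$, decompose $\alpha p z \eqd x_1 \cdots x_n$ as a fusion of factors in $X$ and let $x_i$ be the first factor containing a letter; the preceding factors are forced to equal $\alpha$, so $x_i \eqd \alpha p x_i'$ with $x_i' \in \D(X)$ and $z \eqd x_i' x_{i+1} \cdots x_n \in \D(X) \diamond X^{*}$ (using $\f{last}(x_i') \in \f{At} \eqd X^0 \subseteq X^{*}$ when $i \eqd n$). The reverse inclusion follows since $\alpha p w \in X$ and $y \in X^{*}$ give $\alpha p (wy) \eqd (\alpha p w) y \in X \diamond X^{*} \subseteq X^{*}$. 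Combining with the induction hypothesis $\D(X) \eqd \f{GS}(\del(e))$ and the third auxiliary fact gives $\D(X^{*}) \eqd \f{GS}(\del(e)) \diamond \f{GS}(e\uks) \eqd \f{GS}(\del(e) \cdot e\uks) \eqd \f{GS}(\del(e\uks))$, completing the induction.
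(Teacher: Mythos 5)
Your proof is correct and follows essentially the same route as the paper's: structural induction on $e$, with the concatenation case handled exactly as in the paper via the case-split product rule for $\D$ over the fusion product $\diamond$, followed by the induction hypothesis and the compatibility of $\Gamma \cdot e$ with $\diamond$. The paper only exemplifies with the case $e = e_1 e_2$, so your explicit proof of the product rule and, especially, your direct argument for the star case (correctly observing that feeding the fixpoint identity $X^{*} = \f{At} \ccup (X \diamond X^{*})$ into the product rule collapses to a tautology when $\alpha \in X$, and instead decomposing $\alpha p z \in X^{n}$ at the first factor containing a letter) supply precisely the details the paper leaves to the reader.
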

\begin{proof}
  The proof is obtained by induction on the structure of $e$.  We
  exemplify with the case $e=e_1e_2$, where
	  $$\begin{array}{llll}
	    \D(\f{GS}(e)) &\eql& \D(\f{GS}(e_1) \diamond \f{GS}(e_2)) \\
	    &\eql& \left\{\begin{array}{lr}	
					   \D(\f{GS}(e_1)) \diamond
                                           \f{GS}(e_2)& \ \ \ \ \ \ \
                                           \ \ \ \ \ \ \text{ if } \alpha \notin \f{GS}(e_1) \\
					   \D(\f{GS}(e_1)) \diamond \f{GS}(e_2) \ccup \D(\f{GS}(e_2))  & \text{if } \alpha \in \f{GS}(e_1)
					  \end{array} \right.  		 \\
	  && \mbox{applying the induction hypothesis}		 \\
	  &\eql& \left\{\begin{array}{lr}	
					   (\cup_{e' \in \del(e_1)}
                                           \f{GS}(e')) \diamond
                                           \f{GS}(e_2)   & \ \ \ \ \ \
                                           \text{if } \f{E}_\alpha(e_1)=0 \\
					   (\cup_{e' \in \del(e_1)} \f{GS}(e')) \diamond \f{GS}(e_2) \ccup \f{GS}(\del(e_2))   & \text{if } \f{E}_\alpha(e_1)=1
					  \end{array} \right.   \\
	  &\eql& \left\{\begin{array}{lr}	
              \cup_{e' \in \del(e_1)}
                                           \f{GS}(e' e_2)    & \ \ \ \
                                           \ \ \ \ \ \ \ \ \ \ \ 
                                           \ \ \text{if } \f{E}_\alpha(e_1)=0 \\
					   (\cup_{e' \in \del(e_1)} \f{GS}(e'e_2))  \ccup \f{GS}(\del(e_2))  & \text{if } \f{E}_\alpha(e_1)=1
					  \end{array} \right.   \\
	  &\eql& \left\{\begin{array}{lr}	
                                           \f{GS}(\del(e_1)\cdot e_2)   & \ \ \ \ \ \ \ \ \ \ \ \
                                           \ \ \ \ \ \ \ \ \ \ \ \ \ \
                                         \text{if } \f{E}_\alpha(e_1)=0 \\
					   \f{GS}(\del(e_1)\cdot e_2)  \ccup \f{GS}(\del(e_2))  & \text{if } \f{E}_\alpha(e_1)=1
					  \end{array} \right.  \\
	  &\eql& \f{GS}(\del(e_1e_2)) \eql \f{GS}(\del(e)) .
	  \end{array}$$
\end{proof}

The notion of partial derivative of an expression w.r.t. $\alpha p \in
\f{At} \cdot \Sigma$ can be extended to words  $x \in (\f{At} \cdot \Sigma)\uks$, as follows,
\begin{align*}
 \f{\hat{\Delta}} : (\f{At} \cdot \Sigma)\uks  \:\rightarrow\: \f{Exp}
 &\rightarrow\: {\cal P}(\f{Exp}) \\
 \f{\hat{\Delta}}_1 (e) &= \{e\}		\\
 \f{\hat{\Delta}}_{w \alpha p}(e)  &= \del (\hat{\Delta}_w(e)) .
\end{align*}
Here, the notion of (partial) derivatives has been extended to sets of KAT
expressions $E\subseteq \f{Exp}$, by defining, as expected,
$\f{\Delta}_{\alpha p}(E)=\cup_{e\in E}\f{\Delta}_{\alpha p}(e)$, for
$\alpha p \in \f{At}\cdot\Sigma$. Analogously, we also consider
$\f{\hat{\Delta}}_x(E)$ and $\f{\hat{\Delta}}_R(E)$, for $x\in
(\f{At} \cdot \Sigma)\uks$ and $R\subseteq (\f{At} \cdot \Sigma)\uks$.

The fact, that for any $e\in \f{Exp}$ the set
$\f{\hat{\Delta}}_{(\f{At} \cdot \Sigma)\uks}(e)$ is finite, ensures the
termination of the decision procedure presented in the next section.
\subsection{A Decision Procedure for  KAT Expressions Equivalence} 
\label{secKAT:equiv}
In this section we describe an algorithm for testing the equivalence
of a pair of KAT expressions using partial derivatives.  Following
Antimirov~\cite{antimirov96:_partial_deriv_regul_expres_finit_autom_const},
and for the sake of efficiency, we define the function $\f{f}$ that
given an expression $e$ computes the set of pairs $(\alpha p, e')$, such that for each $\alpha
p\in \f{At} \cdot \Sigma$, the corresponding $e'$ is a partial
derivative of $e$ with respect to $\alpha p$.
\begin{align*}
  \f{f}: \: \f{Exp} \:\: &\rightarrow \:\: {\cal P} (\f{At}\cdot  \Sigma
  \times \f{Exp})		\\
  \f{f}(p)  &=   \ob(\alpha p, 1) \;\; | \;\; \alpha \in \f{At} \cb	\\
  \f{f}(b)  &=  \emptyset	\\
  \f{f}(e_1 + e_2)  &=  \f{f}(e_1) \ccup \f{f}(e_2) 	\\
  \f{f}(e_1 e_2) &=  \f{f}(e_1)\cdot e_2 \ccup \ob (\alpha p, e) \in \f{f}(e_2) \; | \; \f{E_\alpha}(e_1)=1 \cb	 \\
  \f{f}(e\uks) &= \f{f}(e) \cdot e\uks \
\end{align*}
where, as before, $\Gamma \cdot e = \ob (\alpha p, e'e) \st (\alpha p
,e') \in \Gamma \cb$ if $e \neq 0$ and $e \neq 1$, and $\Gamma \cdot 0
\eql \emptyset$ and $\Gamma \cdot 1 \eql \Gamma$, otherwise.  Also, we
denote by $\hd(\f{f}(e))=\{ \alpha p \mid (\alpha p, e') \in
\f{f}(e)\}$ the set of \emph{heads} (i.e. first components of each
element) of $\f{f}(e)$. The function $\der$, defined in
(\ref{def:der}), collects all the partial derivatives of an expression
$e$ w.r.t. $\alpha p$, that were computed by function $\f{f}$.
\begin{align}
\label{def:der}
 \f{der}_{\alpha p}(e)  = \{e' \mid (\alpha p,e') \in \f{f}(e)\}
\end{align}
The proof of the following Proposition is almost trivial and follows
from the symmetry of the definitions of $\der$, $\f{f}$, and $\del$.
\theoremstyle{plain}
\begin{proposition}\label{prop:derpd}
  For all $e,e' \in \f{Exp}$, $\alpha \in \f{At}$ and $p \in \Sigma$
 one has, $ \der(e) \eql \del(e). $
\end{proposition}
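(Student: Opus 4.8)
The plan is to proceed by structural induction on $e$, exploiting the fact that the recursive clauses defining $\f{f}$ are obtained from those defining $\del$ by merely pairing each partial derivative with its head $\alpha p$; projecting $\f{f}$ on a fixed head should therefore give back $\del$ clause by clause. First I would dispatch the base cases. For $e=p'$ the definition gives $\f{f}(p')=\ob(\alpha p',1)\st\alpha\in\f{At}\cb$, so that $\der(p')=\ob 1\cb$ exactly when $p=p'$ and $\der(p')=\emptyset$ otherwise, which is precisely $\del(p')$; for $e=b$ both $\f{f}(b)$ and $\del(b)$ are empty, whence $\der(b)=\emptyset=\del(b)$.

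For the sum, since $\f{f}(e_1+e_2)=\f{f}(e_1)\ccup\f{f}(e_2)$, projecting on a fixed head $\alpha p$ yields $\der(e_1+e_2)=\der(e_1)\ccup\der(e_2)$, and the induction hypothesis closes the case against $\del(e_1+e_2)=\del(e_1)\ccup\del(e_2)$. For the star, from $\f{f}(e\uks)=\f{f}(e)\cdot e\uks$ and the fact that the operation $\Gamma\cdot e\uks$ rewrites only second components while leaving heads untouched, projection commutes with this concatenation, so $\der(e\uks)=\der(e)\cdot e\uks=\del(e)\cdot e\uks=\del(e\uks)$ by the hypothesis.

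The main work, and the only genuinely delicate point, is the product $e=e_1e_2$. Here I would first observe that for a fixed head $\alpha p$ the contribution $\ob(\alpha p,e)\in\f{f}(e_2)\st\f{E}_\alpha(e_1)=1\cb$ equals the whole of $\der(e_2)$ when $\f{E}_\alpha(e_1)=1$ and is empty when $\f{E}_\alpha(e_1)=0$, because the side condition depends on the atom $\alpha$ alone and not on the individual pair. Combined with $\der(e_1)\cdot e_2$ arising from projecting $\f{f}(e_1)\cdot e_2$, this reproduces exactly the two-way case split in the clause defining $\del(e_1e_2)$. The remaining care concerns the boundary cases $e_2=0$ and $e_2=1$ in the definition of $\Gamma\cdot e_2$: I would verify that, on pairs and on plain sets of expressions alike, the degenerate clauses $\Gamma\cdot 0=\emptyset$ and $\Gamma\cdot 1=\Gamma$ agree under head-projection, so that projecting $\f{f}(e_1)\cdot e_2$ onto $\alpha p$ gives $\der(e_1)\cdot e_2$ in every case. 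Applying the induction hypothesis to $e_1$ and $e_2$ then yields $\der(e_1e_2)=\del(e_1e_2)$, finishing the induction. I expect this product case, specifically the commutation of head-projection with the $\f{E}_\alpha$-filter and with the degenerate $\cdot\,0$ and $\cdot\,1$ clauses, to be where whatever (modest) effort there is lies.
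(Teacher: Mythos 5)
Your proof is correct and takes the same route the paper intends: the paper gives no detailed argument, merely remarking that the result ``is almost trivial and follows from the symmetry of the definitions of $\der$, $\f{f}$, and $\del$,'' and your structural induction with head-projection commuting clause by clause (including the careful handling of the $\f{E}_\alpha$-filter and the degenerate $\cdot\,0$ and $\cdot\,1$ cases in the product) is exactly that symmetry argument spelled out.
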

\bigskip
To define the decision procedure we need to consider the above
functions and the ones defined in
Section~\ref{sec:partialderivatives} applied to sets of KAT
expressions. 
Then, we define the function $\f{derivatives}$ that given two
sets of KAT expressions $E_1$ and $E_2$ computes all pairs of sets of
partial derivatives of $E_1$ and $E_2$ w.r.t. $\alpha p \in \f{At}
\cdot \Sigma$, respectively.
\begin{align*}
 \f{derivatives}: {\cal P}(\f{Exp})^2 &\rightarrow \: {\cal P} ({\cal P}(\f{Exp})^2)\\
 \f{derivatives}(E_1,E_2)\:  &=  \{(\der(E_1),\der(E_2))\mid \alpha  p \in \hd(E_1 \ccup E_2)\}
\end{align*}
Finally, we present the function $\f{equiv}$ that tests if two (sets
of) KAT expressions are equivalent. For two sets of KAT expressions
$E_1$ and $E_2$ the function returns $\f{True}$, if for every atom
$\alpha$, $\f{E_\alpha}(E_1) = \f{E_\alpha}(E_2)$ and if, for
every $\alpha p$, the partial derivative of $E_1$ w.r.t.~$\alpha p$ is
equivalent to the partial derivative of $E_2$ w.r.t. $\alpha p$.
\begin{align*}
 \f{equiv}:\:{\cal P}({\cal P}(\f{Exp})^2)  \times  {\cal P}({\cal P}(\f{Exp})^2) &\rightarrow \{\f{True},\f{False}\}   \\
 \f{equiv}(\emptyset,H)  &=  \f{True}       \\
 \f{equiv}(\{(E_1,E_2)\} \ccup S, H)  &=    \left\{
             \begin{array}{lr}
               \f{False} & \text{if } \exists \alpha \in \f{At} : \f{E_{\alpha}}(E_1) \neq \f{E_{\alpha}}(E_2) \\
               \f{equiv}(S \ccup S', H') & \text{otherwise,}
             \end{array} \right.       
        \end{align*}
      
where
\begin{align*}
   S' = \{ d \mid d \in
\f{derivatives}(E_1,E_2) \text{ and }  d \notin H' \} & 
\text{\ \  and\ \  }  H' = \{(E_1,E_2) \} \ccup H.
\end{align*}
The function $\f{equiv}$ accepts two sets $S$ and $H$ as arguments. At
each step,  
 $S$ contains the pairs of (sets of) expressions that still need to be
checked for equivalence, whereas $H$ contains the pairs of (sets of) expressions
that have already been tested.  The use of the set $H$ is important to
ensure that the derivatives of the same pair of (sets of) expressions are not
computed more than once, and thus prevent a possible infinite
loop. 

To compare two expressions $e_1$ and $e_2$, the initial call must be
$\f{equiv}(\{(\{e_1\},\{e_2\})\},\emptyset)$. At each step the
function takes a pair $(E_1,E_2)$ and verifies if there exists an atom
$\alpha$ such that $\f{E_\alpha}(E_1) \neq \f{E_\alpha}(E_2)$. If such an
atom exists, then $e_1 \not= e_2$ and the function halts, returning
$\f{False}$. If no such atom exists, then the function adds
$(E_1,E_2)$ to $H$ and then replaces in $S$ the pair $(E_1,E_2)$ by
the pairs of its corresponding derivatives provided that these are not
in $H$ already. The return value of $\f{equiv}$ will be the result of
recursively calling $\f{equiv}$ with the new sets as arguments.  If
the function ever receives $\emptyset$ as $S$, then the initial call
ensures that $e_1 =e_2$, since all derivatives have been successfully
tested, and the function returns $\f{True}$.

\subsection{Termination and Correctness}
\label{sec:correctness}
First, we show that the function $\f{equiv}$ terminates. For  every KAT
expression $e$, we define the set $\f{PD}(e)$  and show that, for every KAT
expression $e$, the set of partial derivatives of $e$ is a subset  of
$\f{PD}(e)$, which on the other hand is clearly finite. The set
$\f{PD}(e)$  coincides with the \emph{closure} of a KAT expression $e$,
defined by Kozen, and is also similar to Mirkin's prebases~\cite{mirkin66}.
$$\begin{array}{ll}
  \f{PD}(b) &= \{b\} \\
  \f{PD}(p) &= \{p,1\}	\\
\end{array} \qquad
\begin{array}{ll}
  \f{PD}(e_1+e_2) &= \{e_1+e_2\} \ccup \f{PD}(e_1) \ccup \f{PD}(e_2)\\
  \f{PD}(e_1e_2) &= \{e_1e_2\} \ccup \f{PD}(e_1) \cdot e_2 \ccup \f{PD}(e_2)\\
  \f{PD}(e\uks) &= \{e\uks\} \ccup \f{PD}(e) \cdot e\uks.
\end{array}$$

\begin{lemma}	
\label{lemma:lemmaTerminating}
Consider $e, e' \in \f{Exp}$, $\alpha \in \f{At}$ and
 $p \in \Sigma$. If $e' \in \f{PD}(e)$, then 
$\del(e') \subseteq \f{PD}(e)$.
\end{lemma}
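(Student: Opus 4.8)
The plan is to prove the statement by structural induction on $e$, establishing for every $e'\in\f{PD}(e)$ that $\del(e')\subseteq\f{PD}(e)$. Two elementary facts will be used repeatedly. First, every expression belongs to its own closure, $e\in\f{PD}(e)$, which is immediate from each defining clause of $\f{PD}$. Second, the right-multiplication operation $\Gamma\mapsto\Gamma\cdot f$ on sets of expressions is monotone in $\Gamma$ for any fixed $f$ (in all three cases $f=0$, $f=1$, and $f\notin\{0,1\}$ of its definition); hence $\Gamma_1\subseteq\Gamma_2$ implies $\Gamma_1\cdot f\subseteq\Gamma_2\cdot f$. These let me transport the induction hypothesis through the concatenation and star clauses.

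The base cases are immediate: for $e=b$ we have $\f{PD}(b)=\{b\}$ and $\del(b)=\emptyset$, and for $e=p$ we have $\f{PD}(p)=\{p,1\}$ with $\del(p)\subseteq\{1\}$ and $\del(1)=\emptyset$ (since $1\in\f{Bexp}$). For the sum $e=e_1+e_2$, if $e'\in\f{PD}(e_i)$ then the induction hypothesis on $e_i$ gives $\del(e')\subseteq\f{PD}(e_i)\subseteq\f{PD}(e)$; and if $e'=e_1+e_2$, then $\del(e')=\del(e_1)\ccup\del(e_2)$, where each $\del(e_i)\subseteq\f{PD}(e_i)\subseteq\f{PD}(e)$ by the hypothesis applied to $e_i\in\f{PD}(e_i)$.

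The product and star cases are where the work lies. For $e=e_1e_2$, the representative subcase is $e'=e''e_2$ with $e''\in\f{PD}(e_1)$: then $\del(e')$ equals $\del(e'')\cdot e_2$, or $\del(e'')\cdot e_2\ccup\del(e_2)$ when $\e(e'')=1$. By the induction hypothesis $\del(e'')\subseteq\f{PD}(e_1)$, so monotonicity yields $\del(e'')\cdot e_2\subseteq\f{PD}(e_1)\cdot e_2\subseteq\f{PD}(e)$; the degenerate cases $e_2\in\{0,1\}$ are absorbed by the convention for $\Gamma\cdot e_2$. The subcases $e'=e_1e_2$ and $e'\in\f{PD}(e_2)$ are handled the same way, the latter directly by the hypothesis on $e_2$. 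For $e=e_1\uks$, the subcase $e'=e''e_1\uks$ with $e''\in\f{PD}(e_1)$ (as well as $e'=e_1\uks$ itself) is analogous, using $\del(e'')\cdot e_1\uks\subseteq\f{PD}(e_1)\cdot e_1\uks\subseteq\f{PD}(e)$.

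The main obstacle, and the only point needing care, occurs in these last two cases when $e'=e''\cdot f$ (with $f=e_2$ or $f=e_1\uks$) and $\e(e'')=1$: here $\del(e')$ carries an extra summand $\del(f)$ that is not of the shape ``set times $f$'', so it is not covered by the monotonicity argument. I must verify separately that $\del(f)\subseteq\f{PD}(e)$. For $f=e_2$ this is immediate, since $\del(e_2)\subseteq\f{PD}(e_2)\subseteq\f{PD}(e)$. For $f=e_1\uks$ I unfold $\del(e_1\uks)=\del(e_1)\cdot e_1\uks$ and conclude $\del(e_1\uks)\subseteq\f{PD}(e_1)\cdot e_1\uks\subseteq\f{PD}(e)$ using $e_1\in\f{PD}(e_1)$. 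Once this term is accounted for, the remaining reasoning is bookkeeping driven by the two elementary facts above.
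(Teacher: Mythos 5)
Your proof is correct and follows essentially the same route as the paper's: structural induction on $e$ with the same case split over $\{e_1e_2\}$, $\f{PD}(e_1)\cdot e_2$, and $\f{PD}(e_2)$ (resp.\ the star analogues), using $e\in\f{PD}(e)$ and monotonicity of $\Gamma\mapsto\Gamma\cdot f$ to push the induction hypothesis through. The only difference is that you work out the star case and the extra summand $\del(e_1\uks)=\del(e_1)\cdot e_1\uks$ explicitly, which the paper leaves implicit by exemplifying only the concatenation case.
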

\begin{proof}
 The proof is obtained by induction on the structure of $e$. We exemplify with the case $e \eql e_1e_2$.
 Let $e' \in \f{PD}(e_1e_2) \:\eql\: \{e_1e_2\} \ccup \f{PD}(e_1)
 \cdot e_2 \ccup \f{PD}(e_2)$.
\begin{itemize}
\item If $e' \in \{e_1e_2\}$, then $\del(e') \subseteq \del(e_1) \cdot
  e_2 \ccup \del(e_2)$. But $e_1 \in \f{PD}(e_1)$ and $e_2 \in
  \f{PD}(e_2)$, so applying the induction hypothesis twice, we obtain
  $\del(e') \subseteq \f{PD}(e_1) \cdot e_2 \ccup \f{PD}(e_2)
  \subseteq \f{PD}(e).$
\item If $e' \in \f{PD}(e_1) \cdot e_2$, then $e'=e'_1e_2$ such that
  $e'_1 \in \f{PD}(e_1)$. So $\del(e') \subseteq \del(e'_1) \cdot e_2
  \ccup \del(e_2)$ $\subseteq \f{PD}(e_1) \cdot e_2 \ccup \f{PD}(e_2)
  \subseteq \f{PD}(e)$.
 \item Finally, if $e' \in \f{PD}(e_2)$, again by the induction
   hypothesis we have $\del(e') \subseteq \f{PD} (e_2) \subseteq
   \f{PD}(e)$.
\end{itemize}
\end{proof}

\begin{proposition}\label{prop:finitenessParDerivatives}
  For all $x \in (\f{At} \cdot \Sigma)\uks$, one has
  $\hat{\Delta}_x(e) \subseteq \f{PD}(e)$.
\end{proposition}

\begin{proof}
  We prove this lemma by induction on the length of $x$.  If $|x|=0$,
  i.e.~$x=1$, then $\hat{\Delta}_1(e) \:\eql\: \{e\} \:\subseteq\:
  \f{PD}(e)$.  If $x=w \alpha p$, then $\hat{\Delta}_{w \alpha p} \eql
  \cup_{e' \in \hat{\Delta}_w(e)} \del(e')$. By induction hypothesis,
  we know that $\hat{\Delta}_w(e) \subseteq \f{PD}(e)$. By
  Lemma~\ref{lemma:lemmaTerminating}, if $e' \in \f{PD}(e)$, then
  $\del(e') \subseteq \f{PD}(e)$. Consequently, $\cup_{e' \in
    \delw(e)} \del(e') \subseteq \f{PD}(e)$.
\end{proof}

\begin{corollary}\label{corolary} For all KAT expressions $e$, the
  set $\hat{\Delta}_{(\f{At} \cdot \Sigma)\uks} (e)$ is finite.
\end{corollary}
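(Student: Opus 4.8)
The plan is to prove Corollary~\ref{corolary} directly from the two preceding results. First I would observe that, by definition, $\hat{\Delta}_{(\f{At} \cdot \Sigma)\uks}(e) = \bigcup_{x \in (\f{At} \cdot \Sigma)\uks} \hat{\Delta}_x(e)$, so this set collects the partial derivatives of $e$ with respect to every word over $\f{At} \cdot \Sigma$. By Proposition~\ref{prop:finitenessParDerivatives}, each individual $\hat{\Delta}_x(e)$ is contained in $\f{PD}(e)$, and since this holds uniformly for all $x$, the entire union is contained in $\f{PD}(e)$ as well; that is, $\hat{\Delta}_{(\f{At} \cdot \Sigma)\uks}(e) \subseteq \f{PD}(e)$.

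It then remains to argue that $\f{PD}(e)$ is itself finite, from which finiteness of the subset is immediate. The cleanest way to establish this is by structural induction on $e$, reading off the finiteness directly from the recursive definition of $\f{PD}$. In the base cases, $\f{PD}(b) = \{b\}$ and $\f{PD}(p) = \{p,1\}$ are plainly finite. For the inductive cases, I would use that each clause builds $\f{PD}(e)$ as a union of finitely many pieces, each of which is either a singleton, an inductively finite set $\f{PD}(e_i)$, or a set of the form $\f{PD}(e_i) \cdot e_2$ (respectively $\f{PD}(e) \cdot e\uks$). Since the map $e' \mapsto e'e_2$ sends a finite set to a set of no larger cardinality, each such piece is finite; a finite union of finite sets is finite, completing the induction.

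The main step, and the only one requiring genuine care, is the containment $\hat{\Delta}_{(\f{At} \cdot \Sigma)\uks}(e) \subseteq \f{PD}(e)$, but this is already done: Proposition~\ref{prop:finitenessParDerivatives} gives it for each fixed $x$, and the passage to the union is purely set-theoretic. I do not anticipate a real obstacle here, since the genuinely inductive content has been absorbed into Lemma~\ref{lemma:lemmaTerminating} (closure of $\f{PD}(e)$ under single derivatives) and Proposition~\ref{prop:finitenessParDerivatives} (closure under iterated derivatives). The corollary is essentially a bookkeeping statement combining the bound $\subseteq \f{PD}(e)$ with the structural finiteness of $\f{PD}(e)$, so the proof is short and the only thing to watch is to state the union explicitly and to note that finiteness is inherited by subsets.
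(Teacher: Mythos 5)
Your proposal is correct and follows exactly the paper's (implicit) argument: the paper states the corollary without a separate proof, relying on Proposition~\ref{prop:finitenessParDerivatives} to bound $\hat{\Delta}_{(\f{At} \cdot \Sigma)\uks}(e) = \bigcup_{x} \hat{\Delta}_x(e)$ inside $\f{PD}(e)$, which it declares ``clearly finite'' -- the structural induction you spell out for the finiteness of $\f{PD}(e)$ is precisely what that remark leaves to the reader. No gap; your write-up just makes the bookkeeping explicit.
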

It is obvious that the previous results also apply to sets of KAT expressions.

\begin{proposition}\label{prop:termination}
The function $\f{equiv}$ is terminating.  
\end{proposition}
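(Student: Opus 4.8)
The plan is to show that $\f{equiv}$ terminates by arguing that the recursion cannot continue indefinitely. The key observation is that each recursive call either halts immediately (returning $\f{False}$), or replaces the current pair $(E_1,E_2)$ by the pairs in $S'$, while simultaneously adding $(E_1,E_2)$ to the history set $H$. Since $S'$ by construction excludes any pair already in $H'$, no pair of (sets of) expressions can be processed more than once. Termination therefore reduces to showing that only finitely many distinct pairs $(E_1,E_2)$ can ever arise during the computation.

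First I would fix the initial call $\f{equiv}(\{(\{e_1\},\{e_2\})\},\emptyset)$ and identify precisely which pairs can appear in $S$ or $H$ throughout the execution. Every pair generated after the first is produced by the function $\f{derivatives}$, whose components are of the form $(\der(E_1),\der(E_2))$ for some $\alpha p \in \f{At}\cdot\Sigma$. By Proposition~\ref{prop:derpd} we have $\der(E_i)\eql\del(E_i)$, so each newly generated set of expressions is a set of partial derivatives obtained by iterated application of $\del$, i.e.~a set of the form $\hat{\Delta}_x(E_i)$ for some word $x\in(\f{At}\cdot\Sigma)\uks$. Hence the first components that ever occur are contained in $\hat{\Delta}_{(\f{At}\cdot\Sigma)\uks}(\{e_1\})$ and the second components in $\hat{\Delta}_{(\f{At}\cdot\Sigma)\uks}(\{e_2\})$.

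The crucial finiteness input is Proposition~\ref{prop:finitenessParDerivatives}, which gives $\hat{\Delta}_x(e)\subseteq\f{PD}(e)$ for every $x$, together with Corollary~\ref{corolary} (and the remark that these results extend to sets of expressions), which yields that $\hat{\Delta}_{(\f{At}\cdot\Sigma)\uks}(\{e_i\})$ is finite. Since each component of a generated pair is a subset of the finite set $\f{PD}(e_1)$ (respectively $\f{PD}(e_2)$), there are only finitely many possible first components and finitely many possible second components, and so only finitely many distinct pairs $(E_1,E_2)$ can arise. I would then combine this with the no-repetition property: the set $H$ grows by exactly one element at each recursive step, and every element added to $S$ lies in this finite universe of pairs and is never re-added once it enters $H$. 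Consequently $H$ can grow for only finitely many steps before $S$ becomes empty or a distinguishing atom is found, and in either case the recursion terminates.

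The step I expect to require the most care is making the bookkeeping argument rigorous: I would introduce a termination measure such as the cardinality of the (finite) set of as-yet-unvisited pairs, namely those reachable pairs not in $H$, and verify that this measure strictly decreases at every recursive call that does not halt. The subtlety is that $S'$ may add several new pairs at once while only one pair moves into $H$, so one must argue at the level of the whole reachable set rather than the size of $S$ alone; phrasing the measure as the number of reachable pairs outside $H$ sidesteps this, since that quantity drops by one with each step that places a fresh pair into $H$. With the finiteness of the reachable set established from the preceding propositions, this monovariant argument closes the proof.
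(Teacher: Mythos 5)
Your proof is correct and follows essentially the same route as the paper's: the history set $H$ guarantees no pair is processed twice, and Proposition~\ref{prop:finitenessParDerivatives} together with Corollary~\ref{corolary} (extended to sets of expressions) bounds the universe of reachable pairs, so the recursion must exhaust itself. The only difference is that you make the bookkeeping explicit via a decreasing measure (the number of reachable pairs outside $H$), which the paper leaves informal.
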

\begin{proof}
  When the set $S$ is empty it follows directly from the definition of
  the function that it terminates.  We argue that when $S$ is not
  empty the function also terminates based on these two aspects:
\begin{itemize}
\item In order to ensure that the set of  partial derivatives of a pair of (sets of) expressions
  are not computed more than once, the set $H$ is used to store the
  ones which have already been calculated.
\item Each function call removes one pair $(E_1,E_2)$ from the set $S$
  and appends the set of partial derivatives of $(E_1,E_2)$, which
  have not been calculated yet, to $S$. By  {Corollary
    \ref{corolary}}, the set of partial derivatives of an expression
  by any  word is finite, and so eventually
  $S$ becomes $\emptyset$.
\end{itemize}
Thus, since at each call the function analyzes one pair from $S$,
after a finite number of calls the function terminates.
\end{proof}

The next proposition states the correctness of our
algorithm. Coalgebraically it states that two KAT expressions are
equivalent if and only if there exists a bisimulation between
them~\cite[Thm. 5.3]{kozen08:_coalg_theor_of_kleen_algeb_with_tests}.

\begin{proposition}\label{prop:correctnessequiv}
For all KAT expressions $e_1$ and $e_2$,
\[ \f{GS}(e_1) \eql \f{GS}(e_2) \qquad \Leftrightarrow \qquad \left\{
             \begin{array}{ll}
               \f{E}_\alpha (e_1) \eql \f{E}_\alpha (e_2)  \qquad \text{ and}&   \\
               \f{GS}(\del(e_1)) \eql  \f{GS}(\del(e_2)), & \quad \forall {\alpha \in \f{At}, \;\forall p \in \Sigma }.
             \end{array} \right. \]  
\end{proposition}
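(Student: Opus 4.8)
The plan is to prove both directions of the biconditional, exploiting the fact that the right-hand conditions are exactly a pointwise characterization of set equality on guarded strings, decomposed into a "length-zero" part (the atoms accepted) and a "one-step derivative" part. The key enabling result is Proposition~\ref{prop:partialderiv}, which identifies $\f{GS}(\del(e))$ with $\D(\f{GS}(e))$, together with the characterization of $\f{E}_\alpha$ as the indicator of whether $\alpha \in \f{GS}(e)$. Using these, I would reduce the entire statement to a purely set-theoretic fact about subsets of $\f{GS}$.

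The forward direction ($\Rightarrow$) is the easy one. Assuming $\f{GS}(e_1) = \f{GS}(e_2)$, the equality $\f{E}_\alpha(e_1) = \f{E}_\alpha(e_2)$ follows immediately, since by the displayed characterization each side equals $1$ exactly when $\alpha \in \f{GS}(e_i)$, and the two sets coincide. For the derivative condition, I would apply $\D$ to both sides: from $\f{GS}(e_1) = \f{GS}(e_2)$ we get $\D(\f{GS}(e_1)) = \D(\f{GS}(e_2))$ (the left-quotient operation is well-defined on sets), and then Proposition~\ref{prop:partialderiv} rewrites each side as $\f{GS}(\del(e_i))$, giving $\f{GS}(\del(e_1)) = \f{GS}(\del(e_2))$ for all $\alpha$ and $p$.

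The reverse direction ($\Leftarrow$) carries the real content. Here I would argue that any guarded string $x \in \f{GS}$ belongs to $\f{GS}(e_1)$ iff it belongs to $\f{GS}(e_2)$, by cases on the structure of $x$. Recall $\f{GS} = (\f{At}\cdot\Sigma)\uks\cdot\f{At}$, so either $x = \alpha$ is a single atom, or $x = \alpha p\, y$ for some $\alpha \in \f{At}$, $p \in \Sigma$, and $y \in \f{GS}$. In the first case, $\alpha \in \f{GS}(e_1) \Leftrightarrow \f{E}_\alpha(e_1) = 1 \Leftrightarrow \f{E}_\alpha(e_2) = 1 \Leftrightarrow \alpha \in \f{GS}(e_2)$, using the hypothesis on $\f{E}_\alpha$. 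In the second case, $\alpha p\, y \in \f{GS}(e_i) \Leftrightarrow y \in \D(\f{GS}(e_i)) = \f{GS}(\del(e_i))$ by the definition of $\D$ as left quotient and Proposition~\ref{prop:partialderiv}; the hypothesis $\f{GS}(\del(e_1)) = \f{GS}(\del(e_2))$ then gives the equivalence. Since every $x$ falls into exactly one case and the membership of $x$ in the two sets agrees in both, the sets are equal.

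The main obstacle, such as it is, lies in being careful that the two hypotheses jointly capture \emph{all} guarded strings: the $\f{E}_\alpha$ condition handles precisely the atoms (length-zero words), while the derivative condition, quantified over all $\alpha p$, handles every string of positive length via its first symbol. One must verify that the guarded-string structure admits no other cases and that the decomposition $x = \alpha p\,y$ is well-formed (i.e.\ $\f{first}(y) = \alpha$ is automatic from $x \in \f{GS}$), so that no string is missed or double-counted. Once this case analysis is seen to be exhaustive and disjoint, both inclusions follow directly and the equivalence is established.
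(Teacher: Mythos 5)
Your proof is correct and is essentially the paper's own: both directions rest on Proposition~\ref{prop:partialderiv} together with the characterization of $\f{E}_\alpha$ as the indicator of $\alpha\in\f{GS}(e)$, with the same exhaustive split of guarded strings into atoms versus strings of the form $\alpha p\,y$ --- the paper merely phrases the $\Leftarrow$ direction contrapositively (picking a witness in the symmetric difference of $\f{GS}(e_1)$ and $\f{GS}(e_2)$) where you argue directly, an immaterial difference. One harmless slip in your final paragraph: the claim that $\f{first}(y)=\alpha$ is ``automatic'' for the split $x=\alpha p\,y$ is false and also unnecessary --- all that is needed (and all that holds) is $y\in\f{GS}$; no relation between $\alpha$ and $\f{first}(y)$ is required, since such a matching condition concerns the fusion product, not this syntactic decomposition.
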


\begin{proof}
 Let us first prove the $\Leftarrow$ implication. If $\f{GS}(e_1) \neq
 \f{GS}(e_2)$, then there is $x \in \f{GS}$, such that $x \in
 \f{GS}(e_1)$ and $x \notin \f{GS}(e_2)$ (or vice-versa).
If $x=\alpha$, then  we have $\e(e_1) =1 \neq 0= \e(e_2)$
    and the test fails.
If $x=\alpha p w$, such that $w \in (\f{At} \cdot \Sigma)\uks \cdot \f{At}$, then since
    $\alpha p w \in \f{GS}(e_1)$ and $\alpha p w \notin \f{GS}(e_2)$, we have that
    $ w \in \f{GS}(\del(e_1))$ and  $w \notin \f{GS}(\del(e_2)).$
    Thus, $\f{GS}(\del(e_1)) \neq \f{GS}(\del(e_2)).$

    Let us now prove the $\Rightarrow$ implication.  For $\alpha \in
    \f{At}$, there is either $\alpha \in \f{GS}(e_1)$ and $\alpha \in
    \f{GS}(e_2)$, thus $\f{E}_\alpha (e_1) \eql \f{E}_\alpha (e_2)=1$;
    or $\alpha \not\in \f{GS}(e_1)$ and $\alpha \not\in \f{GS}(e_2)$,
    thus $\f{E}_\alpha (e_1) \eql \f{E}_\alpha (e_2)=0$. For
    $\alpha p \in \f{At} \cdot \Sigma$, by
    Proposition~\ref{prop:partialderiv}, one has $\f{GS}(\del(e_1)) \eql \f{GS}(\del(e_2))$
    if and only if $\D(\f{GS}(e_1)) = \D(\f{GS}(e_2))$. This follows
    trivially from $\f{GS}(e_1) \eql \f{GS}(e_2)$.
\end{proof}

\section{Implementation}
\label{sec:implementation}
The algorithm presented in the previous section was implemented in
\emph{OCaml}~\cite{team12:_ocaml}. Alternations, conjunctions, and
disjunctions are represented by sets, and thus, commutativity and
idempotence properties are naturally enforced. Concatenations are
represented by lists of expressions. Primitive tests occurring in a KAT
expression are represented by integers, and atoms by lists of boolean
values (where primitive tests correspond to indexes). For each KAT
expression $e$, we consider $\f{At}$ as the set of atoms that correspond
to the primitive tests that occur in $e$.  The implementation of the
functions defined in Section~\ref{sec:partialderivatives} and
Section~\ref{secKAT:equiv}, do not differ much from their formal
definitions. A common choice was the use of comprehension lists to
define the inclusion criteria of elements in a set. Because of our
basic representation of KAT expressions, we treat in a uniform way
both expressions and sets of expressions. The function $\f{E}_\alpha$,
used in $\f{equiv}$, is implemented using a function called
$\f{eAll}$, that takes as arguments two (sets of) expressions $E_1$
and $E_2$ and verifies if for every atom the truth assignments for
$E_1$ and $E_2$ coincide.

\subsection{Experimental Results}
\label{sec:experimental}
In order to test the performance of our decision procedure we ran some
experiments. We used the \textsf{FAdo} system~\cite{fado_online} to
uniformly random generate samples of KAT expressions. Each sample has
10000 KAT expressions of a given length $|e|$ (number of symbols in
the syntactic tree of $e\in \f{Exp}$). The size of each sample is more
than enough to ensure results statistically significant with 95\%
confidence level within a 5\% error margin.  The tests were executed
in the same computer, an Intel\textsuperscript{\textregistered}
Xeon\textsuperscript{\textregistered} 5140 at 2.33\, GHz with 4\, GB
of RAM, running a minimal 64 bit Linux system.  For each sample we
performed two experiments: (1) we tested the equivalence of each KAT
expression against itself; (2) we tested the equivalence of two
consecutive KAT expressions.  For each pair of KAT expressions we
measured: the size of the set $H$ produced by $\f{equiv}$ (that
measures the number of iterations) and the number of primitive tests in
each expression ($|e|_T$). Table~\ref{tab:exp:random} summarizes some of the results obtained.
Each row corresponds to a sample, where the three first columns
characterize the sample, respectively, the number of primitive actions
($k$), the number of primitive tests ($l$), and the length of each KAT
expression generated.
Column four has the number of primitive tests in
each expression ($|e|_T$).
  Columns five and six give the average size of
$H$ in the experiment (1) and (2), respectively. Column seven is the
ratio of the equivalent pairs in experiment~(2).
Finally, columns eight and nine contain
the average times, in seconds, of each comparison in the experiments (1) and (2). 
More than comparing with existent systems, which is difficult by the
reasons pointed out in the introduction, these experiments aimed to test the feasibility
of the procedure. As expected, the main \emph{bottleneck} is the
number of different primitive tests in the KAT expressions.

\begin{table}[ht]
  \centering
  \begin{small}
   \begin{tabular}{|c|c|c||c|c|c|c|c|c|c|c|}\hline
1& 2& 3 & 4  & 5& 6  & 7 & 8 & 9 \\
\hline
 $k$& $l$& $|e|$ & $|e|_T$  & $H$(1)& $H$(2)  & $=$(2) &Time(1) &Time(2) \\
  \hline\hline
$5$&$5$&$50$     &$9.98$    & $7.35$ &$0.53$   &$0.42$ &$0.0097$&$0.00087$ \\ \hline
$5$&$5$&$100$    &$19.71$   &$15.74$ &$0.76$   &$0.48$ &$0.0875$&$0.00223$\\\hline\hline
$10$&$10$&$50$   &$11.12$   &$8.30$  &$0.50$   &$0.07$ &$0.5050$&$0.30963$\\\hline
$10$&$10$&$100$  &$21.93$   &$16.78$ &$0.67$   &$0.18$ &$20.45$ &$1.31263$\\\hline\hline
$15$&$15$&$50$   &$11.57$   &$8.47$  &$0.47$   &$0.10$ &$6.4578$&$55.22$\\\hline           
  \end{tabular}
\end{small}
  \caption{Experimental results for uniformly random generated KAT expressions.}
  \label{tab:exp:random}
\end{table}


\section{Hoare Logic and KAT}
\label{sec:HL}
\emph{Hoare logic} was first introduced in 1969,
cf.~\cite{hoare69:_axiom_basis_for_comput_progr}, and is a formal
system widely used for the specification and verification of
programs. Hoare logic uses \emph{partial correctness assertions}
(PCA's) to reason about program correctness. A PCA is a triple, $\{b\}
 P  \{c\}$ with $P$ being a program, and $b$ and $c$ logic
formulas. We read such an assertion as \emph{if $b$ holds before the
  execution of $P$, then $c$ will necessarily hold at the end of the
  execution, provided that $P$ halts}.  A deductive system of Hoare
logic provides inference rules for deriving valid PCA's, where rules
depend on the program constructs. We consider a simple \textbf{while}
language, where a program $P$ can be defined, as usual, by
an assignment $x:=v$; a $\mathbf{skip}$ command; a sequence $ P; Q$,
conditional $ \mathbf{if}\; b\;
\mathbf{then}\; P \;\mathbf{else} \; Q$, and a loop $ \mathbf{while} \;b\;\mathbf{do}\; P$.

There are several variations of Hoare logic and here we choose an
inference system, considered
in~\cite{frade11:_verif_condit_for_sourc_level_imper_progr}, that
enjoys the \emph{sub-formula} property,  where the premises of a
rule can be obtained from the assertions that 
occur in the rule's conclusion. With this property, given a PCA
$\{b\}P\{c\}$, where $P$ has also some annotated assertions, it is possible
to automatically generate verification conditions that will ensure
its validity. The inference rules for this system are the following:\\

\AxiomC{$b \to c$}
\UnaryInfC{$\{b\} \; {\bf skip} \; \{c\}$}
\DisplayProof \hspace{0.5cm}
\AxiomC{$b \to c[x/e]$}
      \UnaryInfC{$\{b\}\; x:= e \;\{c\}$}
\DisplayProof\hspace{0.5cm}
       \AxiomC{$\{b\} \; P \; \{c\}$}
       \AxiomC{$\{c\} \; Q \; \{d\}$}
\BinaryInfC{$\{b\} \; P \; ;\{c\} \; Q \; \{d\}$}
\DisplayProof

\medskip
      \AxiomC{$\{b \land c\} \; P \; \{d\}$}
       \AxiomC{$\{\neg b \land c\} \; Q \; \{d\}$}
\BinaryInfC{$\{c\} \; \mathbf{if}\; b\; \mathbf{then}\; P \;\mathbf{else}\; Q \; \{d\}$}
\DisplayProof \hspace{1cm}
              \AxiomC{$\{b \land i\} \; P \; \{i\}$}
              \AxiomC{$ c \to i$}
              \AxiomC{$(i \wedge \neg b) \to d$}
           \TrinaryInfC{$\{c\} \; \mathbf{while}\; b\; \mathbf{do}\; \{i\}
                P\; \{d\}$}
\DisplayProof






\subsection{Encoding Propositional Hoare Logic in KAT} 
\label{sec:modelling}
The propositional fragment of Hoare logic (PHL), i.e., the fragment
without the rule for assignment, can be encoded in
KAT~\cite{kozen00:_hoare_logic_and_kleen_algeb_with_tests}.  The
encoding of an annotated \textbf{while} program $P$ and of our
inference system follow the same lines. In PHL, all assignment
instructions are represented by primitive symbols $p$. The
$\mathbf{skip}$ command is encoded by a distinguished primitive symbol
$p_\mathbf{skip}$. If $e_1$, $e_2$ are respectively the encodings of
programs $P_1$ and $P_2$, then the encoding of more complex constructs
of an annotated \textbf{while} program involving $P_1$ and $P_2$ is as
follows.
\begin{align*}
	 P_1 \; ; \{c\} \; P_2 \:\:  &\Rightarrow \:\:  e_1ce_2		\\
        \mbox{\textbf{if} $b$ \textbf{then} $P_1$ \textbf{else}
            $P_2$}  \:\: &\Rightarrow \:\:  be_1 + \bar{b}e_2 	\\ 
       \textbf{while} \; b \; \textbf{do} \; \{i\}\;P_1 \:\:&\Rightarrow \:\:  (bie_1)\uks \bar{b}
     \end{align*}
   
A PCA of the form $\{b\} P  \{c\}$ is encoded in KAT as an equational identity of the form
\begin{equation}\label{ru:assign}
 b e = b e c \ \ \ \ \ \ \ \ \ \mbox{ or equivalently by} \ \ \ \ \ \
 \ \ \ be\overline{c}=0, \nonumber
\end{equation}
where $e$ is the encoding of the program $P$.

Now, suppose we want to prove the PCA $\{b\} P \{c\}$. Since the
inference system for Hoare logic, that we are considering in this
paper, enjoys the sub-formula property, one can generate mechanically
in a backward fashion the verification conditions that ensure the PCA's
validity.

Since in the KAT encoding,  $be\overline{c}=0$, we do not have the
rule for assignment, besides  verification conditions
(proof obligations) of the form $b' \rightarrow c'$ we will also have
assumptions of the form  $b'p\overline{c'}=0$.

One can generate a set of
assumptions, $\Gamma=\f{Gen}(be\overline{c})$,
backwards from $be\overline{c}=0$, where $\f{Gen}$ is inductively
defined by:

$$\begin{array}{llll}
\f{Gen}(b \; p_{\bf skip} \; \overline{c}) & = & \{ b \leq c \} \\
\f{Gen}(b \; p \; \overline{c}) & = & \{ b \; p \; \overline{c} \}
&p_{\bf skip} \not= p \in \Sigma\\
\f{Gen}(b \; e_1 \; c \; e_2 \;\overline{d}) & = & \f{Gen}(b \; e_1
\; \overline{c}) \cup \f{Gen}(c \; e_2 \;\overline{d}) \\
\f{Gen}(b \; (ce_1 + \bar{c}e_2) \; \overline{d}) &=& \f{Gen}(bc \; e_1
\; \overline{d}) \cup \f{Gen}(b\overline{c} \; e_2 \;\overline{d}) \\ 
\f{Gen}(b \; ((cie)\uks \bar{c}) \; \overline{d}) &=& \f{Gen}(ic \; e
\; \overline{i}) \cup \{ b \leq i, i\overline{c} \leq d \}  
\end{array}$$

Note that $\Gamma$ is necessarily of the form
$$\Gamma = \{ b_1p_1\overline{b_1'}=0, \ldots ,\ b_mp_m\overline{b_m'}=0\} \cup
\{c_1\leq c_1', \ldots , c_n \leq c_n' \},$$
where $p_1,\ldots,p_m \in \Sigma$ and such that all $b$'s and $c$'s
are $\f{Bexp}$ expressions.
In Section~\ref{sec:decidingHoare}, we show how one can prove the validity of
$be_P\overline{c}=0$ in the presence of such a set of assumptions $\Gamma$,
but first we illustrate the encoding and generation of the assumption set with an example.

\subsection{A Small Example}
\label{subsec:ex}
Consider the program P in Table~\ref{tab:prog}, that calculates the
factorial of a non-negative integer. We wish to prove that, at the end
of the execution, the variable $y$ contains the factorial of $x$, i.e.~to verify the assertion
$\{\f{True}\}\; \mbox{P} \; \{y=x!\}$.

\begin{table}[ht]
  \centering
  \begin{small}
    \begin{tabular}{|l|l|c|}\hline
Program $P$
&Annotated Program $P'$
&Symbols used  \\
&& in the encoding
\\
\hline
 & $y := 1;$ & $p_1$	\\
& $\{y = 0!\}$& $t_1$	\\
$y := 1$;& $z := 0;$& $p_2$	\\
$z := 0;$& $\{y = z!\}$& $t_2$	\\
while $\neg z=x$ do& while $\neg z=x$ do& $t_3$	\\
\{& \{& \\
\mbox{\quad z := z+1;}& \mbox{\quad \{y=z!\}}& $t_2$	\\
\mbox{\quad y := y$\times$z;}& \mbox{\quad z := z+1;}& $p_3$	\\
\mbox{\}}& \mbox{\quad \{y$\times$z = z!\}}& $t_4$	\\
& \mbox{\quad y := y$\times$z;}	 &$p_4$	 \\
& \mbox{\}}& \\
\hline
  \end{tabular}
\end{small}
\medskip
  \caption{A program for the factorial}
  \label{tab:prog}
\end{table}
In order to apply the inference rules we need to annotate program $P$,
obtaining program $P'$. Applying the inference rules for deriving
PCA's in a backward fashion to $\{\f{True}\}\; \mbox{P}' \; \{y=x!\}$,
one easily generates the corresponding set of assumptions provided by
the annotated version of the program. However, because we do not have
the assignment rule in the KAT encoding, here we simulate that by
considering not only verification conditions but also atomic PCA's
$\{b'\} x:=e \{c'\}$.  Thus the assumption set is
\begin{align*}
\Gamma_P &= \left\{ \begin{array}{l}
\{\f{True}\} y:=1 \{y=0!\},\{\f{y=0!}\} z:=0  \{y=z!\}, \\
\{\f{y=z!\land\neg z=x}\} z := z+1 \{y \times z = z!\}, \{ y \times z
= z!\} y := y \times z  \{y=z!\}, \\
y = z! \rightarrow y = z!,  (y = z! \wedge \neg \neg z=x)
\rightarrow y=x! \;
\end{array}\right\}.
\end{align*}

On the other hand, using the correspondence of KAT primitive symbols and atomic
parts of the annotated program $P'$, as in Table~\ref{tab:prog},
and additionally encoding $\f{True}$ as $t_0$ and $y=x!$ as $t_5$,
respectively, the encoding of  $\{\f{True}\}\;
\mbox{P}' \; \{y=x!\}$ in KAT is
\begin{equation}
t_0p_1t_1p_2t_2(t_3t_2p_3t_4p_4)\uks \overline{t_3}\overline{t_5} = 0.
\label{eq:pca}	
\end{equation}
The corresponding set of assumptions $\Gamma$ in KAT is
\begin{equation}
\Gamma = \{ t_0p_1\overline{t_1}=0, t_1p_2\overline{t_2}=0,
t_2t_3p_3\overline{t_4}=0, t_4p_4\overline{t_2}=0, t_2 \leq t_2,
t_2\overline{t_3} \leq t_5 \}. \label{assumptions}	
\end{equation}
In the next section we will see how to prove in KAT an equation such
as~(\ref{eq:pca}) in the presence of a set of assumptions such as
(\ref{assumptions}).

\section{Deciding Hoare Logic}
\label{sec:decidingHoare}
Rephrasing the observation in the end of last section, we are
interested in proving in KAT the validity of implications of the form
\begin{equation}\label{eq:assumptions}
  b_1p_1\overline{b_1'}=0 \ \land  \cdots \land \ b_mp_m\overline{b_m'}=0
\ \land \ c_1 \leq c_1' \ \land \cdots \land c_n \leq c_n' \ \ \to \ \
bp\overline{b'}=0.
\end{equation}
This can be reduced to proving the equivalence of
KAT expressions, since it has been shown,
cf.~\cite{kozen00:_hoare_logic_and_kleen_algeb_with_tests}, that for
all KAT expressions $r_1, \ldots, r_n,e_1,e_2$ over $\Sigma = \{ p_1 ,
\ldots , p_k \}$ and $T = \{ t_1 , \ldots , t_l \}$, an implication of
the form
$$ r_1 =0 \ \land \cdots \land \ r_n=0 \;\; \to \;\; e_1 = e_2$$
is a theorem of KAT if and only if 
\begin{equation}
  \label{eq:uru}
  e_1 + uru = e_2 + uru
\end{equation}
where $u= (p_1 + \cdots + p_k)\uks$ and $r = r_1 + \ldots +
r_n$. Testing this last equality
can of course be done by applying our algorithm to $e_1 + uru$ and $e_2 + uru$. However, in
the next subsection, we present an alternative method of proving the
validity of implications of the form \ref{eq:assumptions}. This method
has the advantage of prescinding from the expressions $u$ and $r$, above. 

\subsection{Equivalence of KAT Expressions Modulo a Set of Assumptions}
\label{sec:equivwithaxioms}

In the presence of a finite set of assumptions of the form 
\begin{equation}
  \label{eq:gamma}
\Gamma = \{b_1p_1\overline{b_1'}=0,\ldots,b_mp_m\overline{b_m'}=0\} \cup \{c_1\leq c_1', \ldots , c_n \leq c_n'\}
\end{equation}
we have to restrict ourselves to atoms that satisfy the restrictions
in $\Gamma$. Thus, let 
\begin{equation}
\f{At}^{\Gamma}=\{\; \alpha \in \f{At} \mid  \alpha \leq c \to
\alpha \leq c', \text{ for all } c\leq c' \in
\Gamma \;\}.
\end{equation}

Given a KAT expression $e$, the \emph{set of guarded strings modulo}
$\Gamma$, $\f{GS}^{\Gamma}(e)$, is inductively defined as follows.
$$\begin{array}{llllll}
 \f{GS}^{\Gamma}(p) &  \eqd  & \ob \alpha p \beta \mid
 \alpha, \beta \in \f{At}^{\Gamma}  \wedge
 \forall_{bp\overline{b'}=0 \;\in \; \Gamma} \ ( \alpha \leq b \to
 \beta \leq b') \cb & \\
 \f{GS}^{\Gamma}(b) &  \eqd  & \ob \alpha \in \f{At}^{\Gamma} \st \alpha \leq b \cb  \\
 \f{GS}^{\Gamma}(e_1+e_2) &  \eqd  & \f{GS}^{\Gamma}(e_1) \ccup \f{GS}^{\Gamma}(e_2) 	&  \\
 \f{GS}^{\Gamma}(e_1e_2) &  \eqd  & \f{GS}^{\Gamma}(e_1) \diamond \f{GS}^{\Gamma}(e_2) 	&  \\
 \f{GS}^{\Gamma}(e\uks) &  \eqd  & \cup_{n \geq 0} \f{GS}^{\Gamma}(e)^n .	& 
\end{array}$$  

The following proposition characterizes the \emph{equivalence}
modulo a set of assumptions $\Gamma$, and ensures the correctness of
the new Hoare logic decision procedure.

\medskip
\theoremstyle{plain}
\begin{proposition}
\label{prop:gammacorrectness}
Let $e_1$ and $e_2$ be  KAT expressions and $\Gamma$ a set of
assumptions as in (\ref{eq:gamma}). Then,
$$\f{KAT},\Gamma \vdash e_1 = e_2 \qquad \text{ iff } \qquad
\f{GS}^{\Gamma}(e_1) = \f{GS}^{\Gamma}(e_2).$$  
\end{proposition}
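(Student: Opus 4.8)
The plan is to reduce the left-hand provability statement to a plain KAT equivalence, via the hypothesis-elimination identity~(\ref{eq:uru}) together with completeness of KAT, and then to show that adding the term $uru$ to both expressions has exactly the effect of discarding the guarded strings that violate $\Gamma$ --- which is precisely what $\f{GS}^{\Gamma}$ records.

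First I would normalize $\Gamma$. Each inequality $c_j \le c_j'$ between boolean expressions is, by the Boolean axioms, equivalent in KAT to the equation $c_j\overline{c_j'} = 0$, so $\Gamma$ may be treated as a conjunction of hypotheses $r_i = 0$ with each $r_i$ among the $b_i p_i \overline{b_i'}$ and the $c_j\overline{c_j'}$. Setting $r = \sum_i r_i$ and $u = (p_1 + \cdots + p_k)\uks$, identity~(\ref{eq:uru}) gives $\f{KAT},\Gamma \vdash e_1 = e_2$ iff $\f{KAT} \vdash e_1 + uru = e_2 + uru$, and by completeness of KAT over the guarded-string model this holds iff $\f{GS}(e_1) \cup \f{GS}(uru) = \f{GS}(e_2) \cup \f{GS}(uru)$.

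Next, call a guarded string $x = \alpha_1 p_{i_1} \cdots \alpha_n$ \emph{$\Gamma$-bad} if some atom $\alpha_k$ satisfies $\alpha_k \le c_j\overline{c_j'}$ for a constraint $c_j \le c_j' \in \Gamma$, or some factor $\alpha_k p_{i_k} \alpha_{k+1}$ satisfies $\alpha_k \le b_l$ and $\alpha_{k+1} \le \overline{b_l'}$ for an equation $b_l p_l \overline{b_l'} = 0 \in \Gamma$ with $p_l = p_{i_k}$; otherwise call $x$ \emph{$\Gamma$-good}, and let $\f{Good}$ be the set of $\Gamma$-good strings. I would then prove two claims. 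Claim 1: $\f{GS}^{\Gamma}(e) = \f{GS}(e) \cap \f{Good}$, by induction on $e$. The cases $p$ and $b \in \f{Bexp}$ are immediate once one uses that for an atom $\beta$ exactly one of $\beta \le b'$, $\beta \le \overline{b'}$ holds, so the side-condition in $\f{GS}^{\Gamma}(p)$ is literally ``$\alpha p \beta$ is not a bad factor''; the $+$ case is distributivity; and the $\cdot$ and $\uks$ cases rest on the observation that a fusion product $xy$ is $\Gamma$-good iff both $x$ and $y$ are (no new atom or factor is created at the junction), whence $\f{GS}(e_1 e_2) \cap \f{Good} = (\f{GS}(e_1)\cap\f{Good}) \diamond (\f{GS}(e_2)\cap\f{Good})$ and similarly for iterated powers. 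Claim 2: $\f{GS}(uru) = \f{GS} \setminus \f{Good}$, i.e.\ $uru$ denotes exactly the $\Gamma$-bad strings; here $\f{GS}(u) = \f{GS}$, while $\f{GS}(r)$ consists of the single bad factors together with the single bad atoms, and fusing on both sides with all of $\f{GS}$ yields precisely the strings containing a bad factor or a bad atom somewhere.

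Finally I would combine the pieces: by Claims 1 and 2, $\f{GS}(e_i) \cup \f{GS}(uru) = \f{GS}^{\Gamma}(e_i) \cup (\f{GS}\setminus\f{Good})$, and since $\f{GS}^{\Gamma}(e_i) \subseteq \f{Good}$ is disjoint from $\f{GS}\setminus\f{Good}$, equality of these unions for $i = 1,2$ is equivalent to $\f{GS}^{\Gamma}(e_1) = \f{GS}^{\Gamma}(e_2)$, closing the chain of equivalences. The main obstacle is the $\uks$ case of Claim 1: the definition of $\f{GS}^{\Gamma}(e\uks)$ uses the iterated fusion power, whose base case $X^0$ must be read as $\f{At}^{\Gamma}$ (not $\f{At}$) in the $\Gamma$-semantics, and one must verify that $\f{GS}(e)^n \cap \f{Good} = (\f{GS}(e)\cap\f{Good})^n$ holds for every $n$, including this base case, before taking the union over $n$.
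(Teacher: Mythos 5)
Your proof is correct and follows essentially the same route as the paper's: reduce via identity~(\ref{eq:uru}) and completeness to $\f{GS}(e_1)\cup\f{GS}(uru)=\f{GS}(e_2)\cup\f{GS}(uru)$, characterize $\f{GS}(uru)$ as exactly the guarded strings violating $\Gamma$, prove that $\f{GS}^{\Gamma}(e)$ is $\f{GS}(e)$ with those strings removed by structural induction, and conclude by set-theoretic cancellation. Your Claims 1 and 2 jointly restate the paper's key lemma $\f{GS}^{\Gamma}(e)=\f{GS}(e)\setminus\f{GS}(uru)$, and your observation that the power $X^0$ must be read as $\f{At}^{\Gamma}$ in the $\Gamma$-semantics is a genuine fine point that the paper's ``easily proved by induction'' glosses over.
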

\begin{proof}
  By~(\ref{eq:uru}) one has $\f{KAT},\Gamma \vdash e_1 = e_2$ if and only if $e_1 +
  uru = e_2 + uru$ is provable in KAT, where $u= (p_1 + \cdots +
  p_k)\uks$ and $r = b_1p_1\overline{b_1'} + \cdots +
  b_mp_m\overline{b_m'} + c_1 \overline{c_1'} + \cdots + c_n
  \overline{c_n'}$. The second equality is equivalent to
  $\f{GS}(e_1 + uru)=\f{GS}(e_2 + uru)$, i.e.~$\f{GS}(e_1) \cup
  \f{GS}(uru) = \f{GS}(e_2) \cup \f{GS}(uru)$.  In order to show the
  equivalence of this last equality and $\f{GS}^{\Gamma}(e_1) =
  \f{GS}^{\Gamma}(e_2)$, it is sufficient to show that for every KAT
  expression $e$ one has $\f{GS}^{\Gamma}(e)= \f{GS}(e) \setminus
  \f{GS}(uru)$ (note that $A \cup C = B \cup C \ \Leftrightarrow \ A
  \setminus C = B \setminus C$).

  First we analyze under which conditions a guarded string $x$ is an
  element of $\f{GS}(uru)$. Given the values of $u$ and $r$, it is
  easy to see that $x \in \f{GS}(uru)$ if and only if in $x$ occurs an
  atom $\alpha$ such that $\alpha \leq c$ and $\alpha \not\leq c'$ for
  some $c \leq c' \in \Gamma$, or $x$ has a substring $\alpha p
  \beta$, such that $\alpha \leq b$ and $\alpha \not\leq b'$ for some
  $bp\overline{b'} \in \Gamma$.  This means that $x \not\in
  \f{GS}(uru)$ if and only if every atom in $x$ is an element of
  $\f{At}^{\Gamma}$ and every substring $\alpha p \beta$ of $x$
  satisfies $(\alpha  \leq b \to \beta \leq b')$, for all  ${bp\overline{b'}=0 \;\in \; \Gamma}.$
From this remark and by the definitions of $\f{At}^{\Gamma}$ and
$\f{GS}^{\Gamma}$, we conclude that $\f{GS}^{\Gamma}(e) \cap
\f{GS}(uru) = \emptyset$. 
Note also that, since $\f{GS}^{\Gamma}(e)$ is a restriction of
$\f{GS}(e)$, one has $\f{GS}^{\Gamma}(e) \subseteq
\f{GS}(e)$. Now it suffices to show that for every $x \in \f{GS}(e)
\setminus \f{GS}(uru)$, one has $x \in \f{GS}^{\Gamma}(e)$. This can
be easily proved by induction on the structure of $e$. 
\end{proof}

We now define the set of partial derivatives of a KAT expression
modulo a set of assumptions $\Gamma$. Let $e\in \f{Exp}$.  If $\alpha
\not \in \f{At}^{\Gamma}$, then $\del^{\Gamma} (e) = \emptyset$.  For
$\alpha \in \f{At}^{\Gamma}$, let
\begin{align*}
 \del^{\Gamma} (p')  &= \left\{\begin{array}{ll}	
                                           \{\Pi b' \mid  bp\overline{b'}=0 \in
                                           \Gamma \land \alpha \leq b \;\} & \text{if }
                                           p=p' \\
					   \emptyset   & \text{if } p
                                           \not= p'
					  \end{array} \right.	 \\
 \del^{\Gamma} (b)  &=  \emptyset	\\
 \del^{\Gamma} (e_1 + e_2)  &=  \del^{\Gamma}(e_1) \ccup \del^{\Gamma} (e_2)		\\
 \del^{\Gamma} (e_1 e_2)  &= \left\{\begin{array}{lr}	
					   \del^{\Gamma}(e_1) \cdot e_2  \  & \text{if } \f{E}_\alpha(e_1)=0 \\
					   \del^{\Gamma}(e_1) \cdot
                                           e_2 \ccup
                                           \del^{\Gamma}(e_2)   \qquad
                                           & \qquad \text{if } \f{E}_\alpha(e_1)=1
					  \end{array} \right.\\
 \del^{\Gamma} (e\uks)  &=  \del^{\Gamma} (e) \cdot e\uks.
\end{align*}
Note, that by definition, $\Pi \; b' = 1$ if there is no $bp=bpb' \in
\Gamma$ such that $\alpha \leq b$ and $\alpha \in \f{At}^{\Gamma}$.
The next proposition states the
correctness of the definition of $\del^{\Gamma}$.
\theoremstyle{plain}
\begin{proposition}
 Let $\Gamma$ be a set of assumptions as above, $e\in \f{Exp}$, $\alpha \in
 \f{At}$, and $p \in \Sigma$. Then,
  $$\D(\f{GS}^{\Gamma}(e)) \eql  \f{GS}^{\Gamma}(\del^{\Gamma}(e)).$$
\end{proposition}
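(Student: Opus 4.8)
The plan is to prove $\D(\f{GS}^{\Gamma}(e)) \eql \f{GS}^{\Gamma}(\del^{\Gamma}(e))$ by induction on the structure of $e$, in direct parallel with the proof of Proposition~\ref{prop:partialderiv}, the only genuinely new work being the base case $e=p'$ where the assumption set $\Gamma$ enters the definitions. Throughout, one restricts attention to $\alpha \in \f{At}^{\Gamma}$, since for $\alpha \notin \f{At}^{\Gamma}$ both sides are empty: on the right $\del^{\Gamma}(e) = \emptyset$ by definition, and on the left every guarded string in $\f{GS}^{\Gamma}(e)$ has $\f{first}(x) \in \f{At}^{\Gamma}$, so no string of the form $\alpha p y$ can occur and $\D(\f{GS}^{\Gamma}(e)) = \emptyset$ as well.

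First I would dispatch the case $e=b \in \f{Bexp}$: here $\f{GS}^{\Gamma}(b) \subseteq \f{At}^{\Gamma}$ contains only atoms, so it has no element beginning with $\alpha p$, giving $\D(\f{GS}^{\Gamma}(b)) = \emptyset = \f{GS}^{\Gamma}(\emptyset) = \f{GS}^{\Gamma}(\del^{\Gamma}(b))$. The inductive cases $e=e_1+e_2$, $e=e_1e_2$, and $e=e\uks$ carry over verbatim from the proof of Proposition~\ref{prop:partialderiv}, because the recursive clauses of $\del^{\Gamma}$ are syntactically identical to those of $\del$ and because $\D$, $\f{GS}^{\Gamma}(\cdot)$, $\diamond$ and $\f{E}_\alpha$ interact in exactly the same way; in particular the branching on $\f{E}_\alpha(e_1)\in\{0,1\}$ in the concatenation case and the facts $\f{GS}^{\Gamma}(e_1e_2)=\f{GS}^{\Gamma}(e_1)\diamond\f{GS}^{\Gamma}(e_2)$ and $\D(X \diamond Y) = \D(X)\diamond Y \ccup (\text{term if } \alpha \in X)$ are used unchanged. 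I would state these cases briefly and refer to the earlier proof rather than rewriting the display.

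The main obstacle, and the only delicate step, is the base case $e=p'$. By definition $\del^{\Gamma}(p') = \emptyset$ when $p\neq p'$, matching $\D(\f{GS}^{\Gamma}(p')) = \emptyset$ since no string $\alpha p y$ lies in $\f{GS}^{\Gamma}(p')$. When $p=p'$, one must show
\[
\D\bigl(\f{GS}^{\Gamma}(p)\bigr) \eql \f{GS}^{\Gamma}\!\Bigl(\,\bigl\{\Pi b' \mid bp\overline{b'}=0 \in \Gamma \ \land\ \alpha \leq b\bigr\}\Bigr).
\]
The left side is $\{\beta \in \f{At}^{\Gamma} \mid \alpha p \beta \in \f{GS}^{\Gamma}(p)\}$, which by the clause defining $\f{GS}^{\Gamma}(p)$ equals the set of $\beta \in \f{At}^{\Gamma}$ satisfying $\beta \leq b'$ for every $bp\overline{b'}=0 \in \Gamma$ with $\alpha \leq b$. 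The key calculation is that a test $\beta \in \f{At}^{\Gamma}$ satisfies all these constraints simultaneously exactly when $\beta \leq \Pi\{b' \mid bp\overline{b'}=0 \in \Gamma \land \alpha \leq b\}$, since a conjunction (product) of tests $b'$ bounds $\beta$ from above precisely when each conjunct does, and $\beta \leq 1$ holds vacuously in the empty-product case noted after the definition. Thus the right-hand side, $\f{GS}^{\Gamma}$ applied to that single product test, is $\{\beta \in \f{At}^{\Gamma} \mid \beta \leq \Pi b'\}$, which coincides with the left-hand side; here I would invoke $\f{GS}^{\Gamma}(c) = \{\beta \in \f{At}^{\Gamma} \mid \beta \leq c\}$ for a Boolean expression $c$ and the distributivity of $\leq$ over products of tests in the Boolean subalgebra. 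This verifies the identity and, together with the inductive cases, completes the induction.
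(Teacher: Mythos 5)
Your proof is correct and takes essentially the same route as the paper's: induction on the structure of $e$, deferring the $+$, $\cdot$, $\uks$ cases to the proof of Proposition~\ref{prop:partialderiv}, with the real content in the case $e=p$, where both sides are computed to equal $\{\beta \in \f{At}^{\Gamma} \mid \beta \leq b' \text{ for all } bp\overline{b'}=0 \in \Gamma \text{ with } \alpha \leq b\}$ via the observation that $\beta$ lies below the product $\Pi\, b'$ exactly when it lies below each factor (with the empty product being $1$). If anything, your uniform treatment of the case $\alpha \notin \f{At}^{\Gamma}$ (both sides empty for arbitrary $e$, since every atom occurring in a string of $\f{GS}^{\Gamma}(e)$ belongs to $\f{At}^{\Gamma}$) is slightly cleaner than the paper's, which addresses that case only for $e=p$.
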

\begin{proof}
  The proof is obtained by induction on the structure of $e$. We only
  show the case $e = p$, since the other cases are similar to those in
  the proof of Proposition~\ref{prop:partialderiv}. If $\alpha \not
  \in \f{At}^{\Gamma}$, then $\f{GS}^{\Gamma}(p)=\emptyset =
  \D(\f{GS}^{\Gamma}(p))$. Also, $\del^{\Gamma}(p)=\emptyset =
  \f{GS}^{\Gamma}(\del^{\Gamma}(p))$.  Otherwise, if $\alpha \in
  \f{At}^{\Gamma}$, then $\f{GS}^{\Gamma}(p) = \{\alpha p \beta \mid
  \alpha, \beta \in \f{At}^{\Gamma} \land \forall_{bp\overline{b'}=0
    \;\in \; \Gamma} \ ( \alpha \leq b \to \beta \leq b')\}$, thus
  $\D(\f{GS}^{\Gamma}(p)) =\{\beta \in \f{At}^{\Gamma} \mid \beta \leq
  b' \text{ for all } bp\overline{b'}=0 \;\in \; \Gamma \text{ such
    that } \alpha \leq b\}$. On the other hand, $\del^{\Gamma} (p)=
  \{\Pi b' \mid bp\overline{b'}=0 \in \Gamma \land \alpha \leq b\}$.
  Thus, $\f{GS}^{\Gamma}(\del^{\Gamma}(p)) = \f{GS}^{\Gamma}(c)$,
  where $c=\prod_{bp\overline{b'}=0 \;\in \; \Gamma, \alpha \leq b }
  \;b'$. We conclude that $\f{GS}^{\Gamma}(c) = \{\beta
  \in\f{At}^{\Gamma} \mid \beta \leq b'\text{ for all }
  bp\overline{b'}=0 \;\in \; \Gamma \text{ such that }\alpha \leq b
  \}$.
\end{proof}

\vspace{-0.5cm}

\subsection{Testing Equivalence Modulo a Set of Assumptions}
\label{sec:testinggamma}
The decision procedure for testing equivalence presented before can be
easily adapted. Given a set of assumptions $\Gamma$, the set
$\f{At}^{\Gamma}$ is obtained by filtering in $\f{At}$ all atoms that
satisfy $c$ but do not satisfy $c'$, for all $c\leq c'\in \Gamma$.
The function $\f{f}$ has  to account for the new definition of
$\del^\Gamma$.

We compared this new algorithm, $\f{equiv}^\Gamma$, with $\f{equiv}$
when deciding the PCA presented in Subsection~\ref{subsec:ex}. First,
we constructed expressions $r$ and $u$ from $\Gamma$, as described
above and proved the equivalence of expressions
$t_0p_1t_1p_2t_2(t_3t_2p_3t_4p_4)\uks \overline{t_3}\overline{t_5} +
uru$ and $0 + uru$, with function $\f{equiv}$. In this case $|H|=17$.
In other words, $\f{equiv}$ needed to derive $17$ pairs of expressions
in order to reach a conclusion about the correction of program $P$.
Then, we applied function $\f{equiv}^\Gamma$ directly to the pair
$(t_0p_1t_1p_2t_2(t_3t_2p_3t_4p_4)\uks
\overline{t_3}\overline{t_5},0)$ and $\Gamma$. In this case, $|H|=5$.
Other tests, that we ran, produced similar results, but at this point
we have not carried out a study thorough enough to compare both methods.

\vspace{-0.5cm}
\section{Conclusion}
\label{sec:conclusions}
Considering the algebraic properties of KAT expressions (or even KA
expressions) it seems possible to improve the decision procedure for
equivalence. The procedure essentially computes a bisimulation (or
fails to do that if the expressions are inequivalent); thus it would be
interesting to know if, for instance the maximum bisimulation can be obtained.
Having a method that reduces the amount of used atoms, or
alternatively to resort to an external SAT solver, would
also turn the use of KAT expressions in formal
verification more feasible. Concerning Hoare logic, it would be
interesting to treat the
assignment rule within a decidable first-order theory and to
integrate the KAT decision procedure in an SMT solver.

\vspace{-0.5cm}

\bibliographystyle{eptcs}
\bibliography{kat}
\end{document}